\newtheorem{thm}{Theorem}
\newtheorem{lem}{Lemma}
\begin{document}
\title{The minimum error probability of quantum illumination}
\author{Giacomo De Palma}
\affiliation{QMATH, Department of Mathematical Sciences, University of Copenhagen, Universitetsparken 5, 2100 Copenhagen, Denmark}
\author{Johannes Borregaard}
\affiliation{QMATH, Department of Mathematical Sciences, University of Copenhagen, Universitetsparken 5, 2100 Copenhagen, Denmark}

\begin{abstract}
Quantum illumination is a technique for detecting the presence of a target in a noisy environment by means of a quantum probe.
We prove that the two-mode squeezed vacuum state is the optimal probe for quantum illumination in the scenario of asymmetric discrimination, where the goal is to minimize the decay rate of the probability of a false positive with a given probability of a false negative.
Quantum illumination with two-mode squeezed vacuum states offers a 6 dB advantage in the error probability exponent compared to illumination with coherent states.
Whether more advanced quantum illumination strategies may offer further improvements had been a longstanding open question.
Our fundamental result proves that nothing can be gained by considering more exotic quantum states, such as e.g. multi-mode entangled states.
Our proof is based on a new fundamental entropic inequality for the noisy quantum Gaussian attenuators.
We also prove that without access to a quantum memory, the optimal probes for quantum illumination are the coherent states.
\end{abstract}

\maketitle

\section{Introduction}
Quantum entanglement can increase both resolution and sensitivity in a number of metrological tasks \cite{giovannetti2004quantum,giovannetti2006quantum,giovannetti2011advances,Deganreview2017}.
By using highly entangled states such as NOON states~\cite{Demkowicz2015}, the sensitivity can in principle reach the Heisenberg limit yielding an improvement scaling with the square root of the number of probes compared to the standard quantum limit \cite{Mitchell2004,Afek2010,Facon2016}.
However, in most of these metrological tasks the improvement quickly vanishes in the presence of noise.
One remarkable exception is quantum illumination.

Quantum illumination was introduced in the setting of detecting the presence of a low-reflective target in a noisy environment using single photons as probes~\cite{lloyd2008enhanced}.
By entangling the signal photons with a quantum memory kept at the measurement station, it was shown that the initially strong quantum correlations between the signal photons and the memory resulted in a suppressed error probability compared to non-entangled signals.
Quantum Gaussian states \cite{braunstein2005quantum,weedbrook2012gaussian,serafini2017quantum} emerged as the prominent probes for quantum illumination.
Indeed, the coherent states perform better than the entangled single-photon states \cite{tan2008quantum,shapiro2009quantum}, and the two-mode squeezed vacuum states exhibit a further 6 dB decrease in the error probability exponent compared to the coherent states \cite{tan2008quantum,guha2009receiver,guha2009gaussian,zhuang2017optimum}.
The advantage of the two-mode squeezed vacuum states was experimentally verified in the optical regime \cite{lopaeva2013experimental,lopaeva2014detailed,zhang2015entanglement} and methods for extending this to the microwave regime have been proposed \cite{barzanjeh2015microwave}.
However, despite numerous theoretical studies \cite{wilde2017gaussian,zhang2014quantum,ragy2014quantifying,weedbrook2016discord,cooney2016strong,pirandola2011quantum, zhuang2017entanglement,las2017quantum,sanz2017quantum,paris2009quantum,pirandola2018fundamental} (see also the review \cite{genovese2016real}), determining whether the two-mode squeezed vacuum states are optimal or more advanced strategies may offer further improvements is still a longstanding open question.

\begin{figure} [t]
\includegraphics[width=0.48\textwidth]{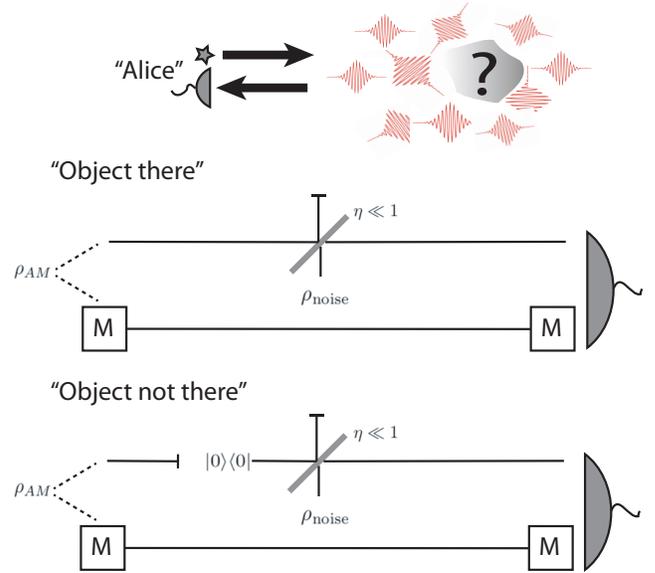}
\caption{The setup of quantum illumination.
Alice wishes to determine whether a low-reflective target is present in a noisy environment.
We model the setup as illustrated in the bottom of the figure, where Alice creates an arbitrary quantum state storing one part in a noise free memory and sending the other part as a signal.
The presence of the target can be modeled as a low transmission ($\eta\ll1$) beam-splitter with the signal and a strong thermal state as input while the absence of the target corresponds to replacing the signal input with the vacuum.
Upon discarding the non-detected modes, the beam-splitter with thermal noise corresponds to a noisy quantum Gaussian attenuator.}
\label{fig:figure1}
\end{figure}

In this paper, we answer this question by proving that the tensor products of identical two-mode squeezed vacuum states constitute the optimal probes among all the possible multi-mode probes in the scenario of asymmetric discrimination, where the goal is to minimize the decay rate of the probability of a false positive with a given probability of a false negative (Theorem \ref{thm:main}).
We also prove that coherent states constitute the optimal probe when a quantum memory is not available.
These two fundamental results completely solve the problem of the optimal probe for quantum illumination for asymmetric discrimination.
The striking and surprising consequence is that nothing can be gained with arbitrary many-mode entangled states compared to the two-mode squeezed vacuum state.

The proof of Theorem \ref{thm:main} is based on a new fundamental entropic inequality (Theorem \ref{thm:S}), stating that the tensor powers of the two-mode squeezed vacuum states minimize the quantum conditional entropy of the output of the noisy quantum Gaussian attenuators among all the input states with a bounded average number of photons.

\section{Quantum illumination}	
In the standard setup of quantum illumination (see \autoref{fig:figure1}), Alice wants to determine the presence of a low-reflective target in a noisy environment by sending an electromagnetic signal towards it.
If the target is present, Alice receives the reflected fraction of the signal plus the environmental noise.
If the target is not present, Alice receives just the noise.
Alice has then to distinguish between two quantum states: the received state with the reflected signal plus the noise and the received state with just the noise \footnote{We mention that also other scenarios have been considered in the literature.
In \cite{cooney2016strong}, the receiver is located on the other side of the target with respect to Alice, and he receives just the noise if the target is present, or the noise plus the transmitted signal if the target is not present.
More generally, one may consider a scenario of quantum reading \cite{pirandola2011quantum} where the receiver gets a reflection from a target or cell which is not necessarily faint.}.

Let us suppose that Alice wants to distinguish between the quantum state $\rho_0^{\otimes n}$ associated to the null hypothesis and the quantum state $\rho_1^{\otimes n}$ associated to the alternative hypothesis.
Let the POVM used for the discrimination have elements $\{M_n,\,\mathbb{I}_n-M_n\}$, where $0\le M_n\le\mathbb{I}_n$.
A type-I error or ``false positive'' occurs when Alice concludes that the state is $\rho_1^{\otimes n}$ when it actually is $\rho_0^{\otimes n}$, and has probability $p^{(n)}_{1|0}=\mathrm{Tr}[\rho_0^{\otimes n}M_n]$.
Conversely, a type-II error or ``false negative'' occurs when Alice concludes that the state is $\rho_0^{\otimes n}$ when it actually is $\rho_1^{\otimes n}$, and has probability $p^{(n)}_{0|1}=\mathrm{Tr}[\rho_1^{\otimes n}(\mathbb{I}_n-M_n)]$.
From the quantum Stein's lemma \cite{hiai1991proper,ogawa2005strong}, the quantum relative entropy
\begin{equation}
S(\rho_1\|\rho_0) = \mathrm{Tr}\left[\rho_1\left(\ln\rho_1-\ln\rho_0\right)\right]
\end{equation}
governs the exponent of the asymptotic optimal decay of the probability of a false positive with a given probability of a false negative: for any $\epsilon>0$,
\begin{equation}\label{eq:limit}
\lim_{n\to\infty}\frac{1}{n}\ln\inf_{0\le M_n\le\mathbb{I}_n}\left\{p^{(n)}_{1|0}:p^{(n)}_{0|1}<\epsilon\right\} = -S(\rho_1\|\rho_0),
\end{equation}
such that for $n\to\infty$
\begin{equation}\label{eq:approx}
\inf_{0\le M_n\le\mathbb{I}_n}\left\{p^{(n)}_{1|0}:p^{(n)}_{0|1}<\epsilon\right\} \simeq \exp\left(-n\,S(\rho_1\|\rho_0)\right)\;.
\end{equation}

In the quantum illumination scenario, the natural choice of null hypothesis is ``target not present'', and the consequent choice of alternative hypothesis is ``target present''.
This setup has been analyzed in \cite{spedalieri2014asymmetric,zhuang2017entanglement,wilde2017gaussian} in the particular case of
Gaussian probes.
An $n$-mode Gaussian quantum system \cite{braunstein2005quantum,weedbrook2012gaussian,holevo2013quantum,serafini2017quantum} is the quantum system of $n$ harmonic oscillators, or $n$ modes of the electromagnetic radiation.
Let Alice hold the quantum state $\rho_{AM}$, where $A$ is the $n$-mode Gaussian quantum system of the signal sent towards the target, and $M$ is the memory quantum system that Alice keeps.
We stress that our analysis is not restricted to Gaussian quantum states, and $\rho_{AM}$ can be any state of the joint quantum system $AM$.
If the target is present, Alice gets back an attenuated signal with some thermal noise, and her final state is $\rho_{BM}=(\Phi\otimes\mathbb{I}_M)(\rho_{AM})$, where $\Phi$ is the quantum channel that models the noise.
Thermal Gaussian noise is the most common assumption and provides a faithful model in most practical scenarios \cite{tan2008quantum,wilde2017gaussian}.
Therefore, we assume $\Phi$ to be an $n$-mode noisy quantum Gaussian attenuator with thermal noise \cite{weedbrook2012gaussian,holevo2013quantum,serafini2017quantum}.
Let $a_i$ and $e_i$, $i=1,\ldots,n$, be the ladder operators of the system $A$ and the noise, respectively.
Alice then receives the system $B$ with ladder operators
\begin{equation}
b_i = \sqrt{\eta}\,a_i + \sqrt{1-\eta}\,e_i,\qquad i=1,\,\ldots,\,n,
\end{equation}
where $0<\eta\ll1$ is the fraction of the signal that is reflected (see \autoref{fig:figure1}).
Alice's goal is performing a measurement on her final state $\rho_{BM}$ to detect whether the target is there.
If the target is not present, Alice gets back only the noise, and her final state is $\omega_B\otimes \rho_M$, where $\omega_B=\Phi(|0\rangle\langle0|)$ is an $n$-mode thermal quantum Gaussian state, $|0\rangle$ is the $n$-mode vacuum and $\rho_M$ is the marginal on $M$ of $\rho_{AM}$.
This enforces that Alice cannot get any information on the presence of the target if she does not send any signal, i.e., if $\rho_{AM} = |0\rangle_A\langle0|\otimes\rho_M$.

If Alice could send an arbitrarily bright signal, she could always detect whether the target is present.
Therefore, we impose that the marginal state of the signal $\rho_A$ must have average number of photons at most $E>0$, i.e.
\begin{equation}\label{eq:constr}
\mathrm{Tr}_A\left[H_A\,\rho_A\right]\le n\,E,
\end{equation}
where $H_A$ is the standard Hamiltonian of $A$ that counts the total number of photons.
We do not consider the energy of the memory system.

\section{The optimal probes}
We consider the asymmetric discrimination in quantum illumination with $\rho_1 = \rho_{BM}$ and $\rho_0 = \omega_B\otimes \rho_M$.
The optimal asymptotic error exponent is governed by the following quantum relative entropy:
\begin{equation}\label{eq:S}
S(\rho_{BM}\|\omega_B\otimes \rho_M) = -S(B|M)_{\rho_{BM}} - \mathrm{Tr}_B\left[\rho_B\ln\omega_B\right],
\end{equation}
where
\begin{equation}
S(X|Y) = S(XY) - S(Y)
\end{equation}
is the quantum conditional entropy \cite{nielsen2010quantum,holevo2013quantum,hayashi2016quantum,wilde2017quantum} (see \cite{kuznetsova2010quantum,wilde2016energy} for the definition in the case where $S(XY) = S(Y) = \infty$ and \cite{pirandola2017fundamental} for a simple formula for Gaussian states).
If $M$ is not there, the relevant quantum relative entropy is
\begin{equation}\label{eq:S2}
S(\rho_{B}\|\omega_B) = -S(\Phi(\rho_A)) - \mathrm{Tr}_B\left[\rho_B\ln\omega_B\right].
\end{equation}

Our main result states that Alice's best probes for the asymmetric discrimination problem are the tensor powers of the two-mode squeezed vacuum states if she has a quantum memory, and the coherent states otherwise.
The coherent states of a one-mode Gaussian quantum system are the states of the form
\begin{equation}
|\alpha\rangle = \mathrm{e}^{-\frac{|\alpha|^2}{2}}\sum_{k=0}^\infty\frac{\alpha^k}{\sqrt{k!}}|k\rangle,\qquad \alpha\in\mathbb{C},
\end{equation}
where $\{|k\rangle\}_{k\in\mathbb{N}}$ is the Fock basis.
The coherent states of an $n$-mode Gaussian quantum system are the tensor products of the coherent states of each mode.
The two-mode squeezed vacuum states of a two-mode Gaussian quantum system are the states of the form
\begin{equation}\label{eq:tms}
|\psi(z)\rangle = \sqrt{1-z^2}\sum_{k=0}^\infty z^k\,|k\rangle\otimes|k\rangle,\qquad 0\le z<1.
\end{equation}
They can be obtained applying a two-mode squeezing to the two-mode vacuum state.

In most experimental setups we have $N_B\gg1$ and $E\ll1$, where $N_B$ and $E$ are the average number of photons per mode of the noise and of the probe, respectively \cite{genovese2016real,tan2008quantum,zhang2015entanglement}.
In this regime, the approximation \eqref{eq:approx} for the error probability is valid when the number of modes of the probe $n$ satisfies \cite{wilde2017gaussian}
\begin{equation}\label{eq:condn}
n \gg \frac{N_B}{\eta\,E}\;.
\end{equation}
Condition \eqref{eq:condn} is fulfilled by most experimental setups, e.g., in \cite{zhang2015entanglement} we have $n\approx 10^{11}$ with $\frac{N_B}{\eta\,E}\approx 10^7$.
Therefore, the limit of infinite modes in \eqref{eq:limit} does not restrict the applicability of our results to actual experiments.

\begin{thm}\label{thm:main}
Let $A$ be an $n$-mode Gaussian quantum system and $M$ a generic quantum system.
Then, for any joint quantum state $\rho_{AM}$ such that $\mathrm{Tr}_A[\rho_AH_A]\le n\,E$,
\begin{equation}\label{eq:S'}
S(\rho_{BM}\|\omega_B\otimes \rho_M) \le S(\sigma_{BA'}\|\omega_B\otimes \sigma_{A'}),
\end{equation}
where $A'$ is an $n$-mode Gaussian quantum system, $\sigma_{AA'}$ is the $n$-th tensor power of the two-mode squeezed vacuum state \eqref{eq:tms} such that its marginal $\sigma_A$ on $A$ has average number of photons per mode $E$, $\rho_{BM} = (\Phi\otimes\mathbb{I}_M)(\rho_{AM})$ and $\sigma_{BA'} = (\Phi\otimes\mathbb{I}_M)(\sigma_{AA'})$.
Therefore, the squeezed vacuum state $\sigma_{AA'}$ constitutes Alice's optimal strategy in the asymmetric discrimination scenario.
\end{thm}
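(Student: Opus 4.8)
The plan is to reduce the optimization over all joint states $\rho_{AM}$ to a statement purely about the quantum conditional entropy of the output of the attenuator, and then to invoke the entropic inequality of Theorem~\ref{thm:S}. Starting from the decomposition \eqref{eq:S}, I would write
\begin{equation}
S(\rho_{BM}\|\omega_B\otimes\rho_M) = -S(B|M)_{\rho_{BM}} - \mathrm{Tr}_B\left[\rho_B\ln\omega_B\right].
\end{equation}
The crucial observation is that the second term depends on $\rho_{AM}$ only through the marginal $\rho_B = \Phi(\rho_A)$, and since $\omega_B$ is a thermal Gaussian state, $-\ln\omega_B$ is (up to an additive constant) proportional to the photon-number Hamiltonian $H_B$. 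Hence $-\mathrm{Tr}_B[\rho_B\ln\omega_B]$ is an increasing affine function of the average photon number of $\rho_B$, which by the attenuator relation $b_i=\sqrt\eta\,a_i+\sqrt{1-\eta}\,e_i$ equals $\eta$ times the photon number of $\rho_A$ plus a fixed noise contribution. Therefore this term is maximized by saturating the energy constraint $\mathrm{Tr}_A[\rho_A H_A]=nE$, which the two-mode squeezed vacuum $\sigma_{AA'}$ does by construction.

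The remaining and genuinely quantum part is the first term, $-S(B|M)_{\rho_{BM}}$: making the right-hand side large means making the conditional entropy $S(B|M)_{\rho_{BM}}$ as small as possible. This is precisely the content of Theorem~\ref{thm:S}, which asserts that among all input states with average photon number per mode at most $E$, the tensor powers of the two-mode squeezed vacuum minimize the output conditional entropy of the noisy attenuator. Thus I would apply Theorem~\ref{thm:S} to obtain
\begin{equation}
S(B|M)_{\rho_{BM}} \ge S(B|A')_{\sigma_{BA'}},
\end{equation}
which simultaneously controls the first term in favor of $\sigma$. The delicate point in combining the two bounds is that they must be compatible at the same energy level: one must verify that the state saturating the energy constraint and minimizing the conditional entropy can be taken to be the \emph{same} state, namely $\sigma_{AA'}$ with marginal photon number exactly $E$. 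Because the conditional-entropy minimizer in Theorem~\ref{thm:S} is itself the two-mode squeezed vacuum at energy $E$, both terms are optimized by one and the same probe, and the two inequalities add up coherently.

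Putting the pieces together, for any admissible $\rho_{AM}$ I would bound
\begin{equation}
S(\rho_{BM}\|\omega_B\otimes\rho_M) \le -S(B|A')_{\sigma_{BA'}} - \mathrm{Tr}_B\left[\sigma_B\ln\omega_B\right] = S(\sigma_{BA'}\|\omega_B\otimes\sigma_{A'}),
\end{equation}
which is exactly \eqref{eq:S'}; the final equality is just \eqref{eq:S} applied to $\sigma$. The main obstacle is entirely concentrated in the entropic inequality of Theorem~\ref{thm:S}, since the reduction above is elementary once that result is granted. I expect the proof of Theorem~\ref{thm:S} to require the heavy machinery—likely a scaling or majorization argument for Gaussian attenuators, or a quantum analogue of the entropy-power inequality adapted to conditional entropy—whereas the present theorem is then a short deduction. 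A minor but necessary check is that the relative entropies are finite (so that the conditional entropy is well-defined in the sense of the cited regularizations), which holds because $\omega_B$ is a faithful thermal state with full support and $\rho_B$ has finite energy.
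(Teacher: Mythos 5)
Your proposal is correct and follows essentially the same route as the paper: decompose the relative entropy via \eqref{eq:S}, observe that $-\mathrm{Tr}_B[\rho_B\ln\omega_B]$ is an increasing affine function of the input energy (since $-\ln\omega_B = a\,H_B+b\,\mathbb{I}_B$ for a thermal state) and is therefore maximized by saturating the constraint, and control the conditional-entropy term by invoking Theorem~\ref{thm:S}, whose minimizer is the same two-mode squeezed vacuum at energy $E$. The paper's proof of Theorem~\ref{thm:main} is exactly this short deduction, with all the real work deferred to Theorem~\ref{thm:S}, just as you anticipated.
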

\begin{thm}
Let $A$ be an $n$-mode Gaussian quantum system.
Then, for any quantum state $\rho_{A}$ of $A$ such that $\mathrm{Tr}_A[\rho_AH_A]\le n\,E$,
\begin{equation}\label{eq:S2'}
S(\Phi(\rho_A)\|\omega_B) \le S(\Phi(|\alpha\rangle\langle\alpha|)\|\omega_B),
\end{equation}
where $|\alpha\rangle$, $\alpha\in\mathbb{C}^n$ is an $n$-mode coherent state of $A$ with average number of photons per mode $E$ (i.e., with $|\alpha|^2=n\,E$).
Therefore, the coherent state $|\alpha\rangle$ constitutes Alice's optimal strategy in the asymmetric discrimination scenario when a quantum memory is not available.
\end{thm}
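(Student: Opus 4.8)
The plan is to exploit the decomposition \eqref{eq:S2}, which expresses the relevant relative entropy as the sum of an entropy term and an energy term,
\[
S(\Phi(\rho_A)\|\omega_B) = -S(\Phi(\rho_A)) - \mathrm{Tr}_B\left[\rho_B\ln\omega_B\right],
\]
and to show that the coherent state with $|\alpha|^2 = nE$ simultaneously maximizes each of the two terms over the feasible set \eqref{eq:constr}. Since maximizing $S(\Phi(\rho_A)\|\omega_B)$ maximizes the decay rate of the false-positive probability, establishing that coherent states maximize the right-hand side proves \eqref{eq:S2'}.

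First I would treat the energy term $-\mathrm{Tr}_B[\rho_B\ln\omega_B]$. Because $\omega_B = \Phi(|0\rangle\langle0|)$ is a thermal quantum Gaussian state that is invariant under permutations of the modes, its logarithm is an affine function of the photon-number Hamiltonian, $\ln\omega_B = -c_1\,H_B - c_0\,\mathbb{I}$ with $c_1>0$. Hence $-\mathrm{Tr}_B[\rho_B\ln\omega_B] = c_1\,\mathrm{Tr}_B[\rho_B H_B] + n\,c_0$ depends on $\rho_B$ only through the mean output photon number. From $b_i = \sqrt{\eta}\,a_i + \sqrt{1-\eta}\,e_i$ with the signal $a$ and the thermal noise $e$ uncorrelated and $e$ centered, one gets $\mathrm{Tr}_B[\rho_B H_B] = \eta\,\mathrm{Tr}_A[\rho_A H_A] + n(1-\eta)N_B$, an increasing affine function of the input energy. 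The energy term is therefore maximized exactly when the constraint \eqref{eq:constr} is saturated, i.e. $\mathrm{Tr}_A[\rho_A H_A] = nE$, which the coherent state with $|\alpha|^2 = nE$ achieves.

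Next I would treat the entropy term, i.e. I would show that coherent states minimize the output entropy $S(\Phi(\rho_A))$, so that $-S(\Phi(\rho_A))$ is maximized. This is precisely the minimum output entropy theorem for the multimode noisy quantum Gaussian attenuator, which I would invoke as an external input: the product of single-mode vacua minimizes the output entropy, and by the corresponding additivity result this extends to the $n$-mode channel $\Phi$. Moreover, since $\Phi$ is covariant under displacements, $\Phi(|\alpha\rangle\langle\alpha|)$ is a displacement of $\Phi(|0\rangle\langle0|) = \omega_B$, so every coherent state attains this same minimal output entropy regardless of $\alpha$. Combining the two steps, the coherent state with $|\alpha|^2 = nE$ maximizes both terms at once, which yields \eqref{eq:S2'}.

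The main obstacle is the entropy term. Unlike the energy term, which is an elementary consequence of the affine structure of thermal states, the statement that coherent states minimize the output entropy is the deep ingredient, and it is \emph{not} implied by Theorem \ref{thm:S}: the conditional-entropy bound there is attained by the two-mode squeezed vacuum, which requires a memory, whereas here the optimal memoryless probe is forced to be a coherent state. One must therefore rely on the separately established minimum output entropy theorem for Gaussian attenuators, together with the displacement covariance that makes the choice of $\alpha$ irrelevant for the entropy while the saturated energy constraint fixes $|\alpha|^2 = nE$.
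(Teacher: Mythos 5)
Your proposal is correct and follows essentially the same route as the paper: both split $S(\Phi(\rho_A)\|\omega_B)$ via \eqref{eq:S2} into the energy term, which is an increasing affine function of $\mathrm{Tr}_A[\rho_A H_A]$ and hence maximized by saturating \eqref{eq:constr}, and the output-entropy term, which is handled by invoking the minimum output entropy theorem for noisy Gaussian attenuators as an external input. Your added remarks on displacement covariance and on why Theorem \ref{thm:S} does not subsume this case are accurate but do not change the argument.
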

\begin{proof}
The term $- \mathrm{Tr}_B[\rho_B\ln\omega_B]$ in \eqref{eq:S} and \eqref{eq:S2} is a function of $\mathrm{Tr}_A[\rho_AH_A]$ alone.
Indeed, since $\omega_B$ is a thermal Gaussian state, $-\ln\omega_B = a\,H_B + b\,\mathbb{I}_B$ with $a,b>0$.
Hence, $- \mathrm{Tr}_B[\rho_B\ln\omega_B] = a\,\mathrm{Tr}_B[\rho_BH_B] + b$, and the claim follows since $\mathrm{Tr}_B[\rho_BH_B]$ is a linear increasing function of $\mathrm{Tr}_A[\rho_AH_A]$ \cite{holevo2013quantum}.

Since coherent states minimize the output entropy of the noisy quantum attenuators \cite{giovannetti2015solution,mari2014quantum}, Alice's best choice in \eqref{eq:S2} is choosing $\rho_A$ to be a $n$-mode coherent state with the maximum allowed average number of photons.
The best choice is not unique, since the average number of photons can be distributed in an arbitrary way among the $n$ modes.

The optimality of the two-mode squeezed vacuum states in \eqref{eq:S} follows from Theorem \ref{thm:S} below, a new fundamental entropic inequality for the noisy quantum Gaussian attenuators.
\end{proof}
\begin{thm}\label{thm:S}
Let $A$ and $B$ be $n$-mode Gaussian quantum systems and $M$ a generic quantum system, and let $\Phi:A\to B$ be a noisy quantum Gaussian attenuator with thermal noise.
Let $E\ge0$, and let $\rho_{AM}$ be a joint quantum state of $AM$ such that $\mathrm{Tr}_A[\rho_AH_A]\le n\,E$ and $S(\rho_M)<\infty$, where $H_A$ is the Hamiltonian of $A$, and $\rho_A$ and $\rho_M$ are the marginals of $\rho_{AM}$ on $A$ and $M$, respectively.
Then,
\begin{equation}
S(B|M)_{\rho_{BM}} \ge S(B|A')_{\sigma_{BA'}},
\end{equation}
where $A'$ is an $n$-mode Gaussian quantum system, $\sigma_{AA'}$ is the $n$-th tensor power of the two-mode squeezed vacuum state such that its marginal $\sigma_A$ on $A$ has average number of photons per mode $E$, $\rho_{BM} = (\Phi\otimes\mathbb{I}_M)(\rho_{AM})$ and $\sigma_{BA'} = (\Phi\otimes\mathbb{I}_{A'})(\sigma_{AA'})$.
\end{thm}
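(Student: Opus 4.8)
The plan is to prove the sharp inequality in three stages: first reduce the problem to \emph{pure} input states, then rewrite the conditional entropy of a pure input through the complementary channel, and finally carry out the resulting unconditioned Gaussian optimization.

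First I would trade the memory $M$ for a purification. Let $R$ purify $\rho_{AM}$, so that $\rho_{AMR}$ is pure and $MR$ is a purification of the signal marginal $\rho_A$. Since $\Phi$ acts only on $A$, strong subadditivity (conditioning cannot increase the conditional entropy) gives
\begin{equation}
S(B|M)_{\rho_{BM}}\ge S(B|MR)_{\rho_{BMR}}.
\end{equation}
Because any two purifications of the fixed state $\rho_A$ differ by an isometry on the purifying system, and such an isometry leaves $S(B|MR)$ invariant, the right-hand side depends on $\rho_{AM}$ only through $\rho_A$ and equals $S(B|A')$ evaluated on the canonical purification of $\rho_A$. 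Hence it suffices to prove the claim for pure inputs and to show that, among all purifications of energy-constrained states $\rho_A$, the two-mode squeezed vacuum is optimal. The hypotheses $S(\rho_M)<\infty$ and $\mathrm{Tr}[H_A\rho_A]\le nE$ guarantee that all entropies involved are finite, so these manipulations are well defined.

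Second, for a pure input I would pass to the complementary channel. Fixing a Stinespring isometry $V:A\to B\mathcal{E}$ of $\Phi$ (with the thermal environment purified inside $\mathcal{E}$), the global state $(V\otimes\mathbb{I}_{A'})|\psi\rangle_{AA'}$ is pure on $B\mathcal{E}A'$. Purity then yields $S(BA')=S(\Phi^c(\rho_A))$ and $S(A')=S(\rho_A)$, where $\Phi^c$ is the complementary channel, so that
\begin{equation}
S(B|A')=S(\Phi^c(\rho_A))-S(\rho_A).
\end{equation}
This eliminates the reference entirely and reduces Theorem~\ref{thm:S} to the \emph{unconditioned} statement that the thermal state (whose purification is the two-mode squeezed vacuum) minimizes $S(\Phi^c(\rho_A))-S(\rho_A)$ subject to $\mathrm{Tr}[H_A\rho_A]\le nE$.

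The last stage is the crux, and is where I expect the real difficulty. To control $S(\Phi^c(\rho_A))-S(\rho_A)$ I would decompose the noisy attenuator as a quantum-limited attenuator followed by a quantum-limited amplifier and analyse the two gauge-covariant building blocks with the majorization and Gaussian-optimizer machinery underlying the minimum-output-entropy theorems already used in the unconditioned case, together with a Lagrange-multiplier treatment of the energy constraint and a concavity/symmetry argument restricting the optimizer to Gaussian, gauge-covariant---hence thermal---states; the multimode statement would then be recovered by showing that the single-mode thermal optimizer tensorizes. The essential obstacle is that $S(\Phi^c(\rho_A))-S(\rho_A)$ is a \emph{difference} of entropies and is therefore neither convex nor concave, so the standard tools do not apply directly. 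In particular the quantum conditional Entropy Power Inequality, applied to the beam-splitter realization of $\Phi$ conditioned on $A'$, yields only the lower bound $S(B|A')\ge n\ln(\eta\,\mathrm{e}^{-g(E)}+(1-\eta)\,\mathrm{e}^{g(N)})$, writing $g(x)=(x+1)\ln(x+1)-x\ln x$ for the entropy of a one-mode thermal state; one checks that this bound is strictly weaker than the value actually attained by the squeezed vacuum, so the Entropy Power Inequality alone cannot certify optimality. Closing this gap requires a genuinely sharp entropic inequality for the quantum-limited attenuator and amplifier, and establishing that inequality---rather than the reductions above---is the heart of the proof.
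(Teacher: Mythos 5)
Your first two reductions---purifying $\rho_{AM}$ and invoking strong subadditivity to reduce to pure inputs, then using purity of the Stinespring dilation to rewrite $S(B|A')=S(\tilde{\Phi}(\rho_A))-S(\rho_A)$ with $\tilde{\Phi}$ the complementary channel---coincide exactly with the paper's. But the third stage, which you yourself identify as ``the heart of the proof,'' is not actually carried out: you sketch an attack via the attenuator--amplifier decomposition, majorization, and Gaussian-optimizer machinery, then correctly observe that the objective $S(\tilde{\Phi}(\rho_A))-S(\rho_A)$ is a difference of entropies to which those tools do not directly apply, and you stop there. As written, the proposal reduces the theorem to an unproven sharp inequality rather than proving it. That is a genuine gap.

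The paper closes this gap with a short argument that avoids the heavy machinery you anticipate. First, the inequality $\mathrm{Tr}[H_A\rho_A]\le nE$ is upgraded to an equality by composing with an additive-noise channel $\mathcal{N}$ and invoking a monotonicity result (Lemma 10 of \cite{de2018conditional}) stating that $S(\tilde{\Phi}(\rho_A))-S(\rho_A)\ge S(\tilde{\Phi}(\mathcal{N}(\rho_A)))-S(\mathcal{N}(\rho_A))$. Then, with $\sigma_A$ the thermal state of the same energy as $\rho_A$, one has $S(\rho_A\|\sigma_A)=S(\sigma_A)-S(\rho_A)$ because $\ln\sigma_A$ is affine in $H_A$, and---this is the key new ingredient, the paper's Lemma 1---also
\begin{equation}
S(\tilde{\Phi}(\rho_A)\|\tilde{\Phi}(\sigma_A))=S(\tilde{\Phi}(\sigma_A))-S(\tilde{\Phi}(\rho_A)),
\end{equation}
because the covariance of $\tilde{\Phi}$ with respect to the gauge group generated by $H_A$ (intertwined with $H_E-H_{E'}$ on the output) forces $\tilde{\Phi}^\dag(\ln\tilde{\Phi}(\sigma_A))$ to be an affine function of $H_A$, so the cross term depends only on the (equal) energies. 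Subtracting, the quantity you need to bound becomes
\begin{equation}
S(B|M)_{\rho_{BM}}-S(B|A')_{\sigma_{BA'}}=S(\rho_A\|\sigma_A)-S(\tilde{\Phi}(\rho_A)\|\tilde{\Phi}(\sigma_A))\ge0
\end{equation}
by the data processing inequality for the relative entropy. No entropy power inequality, Lagrange multipliers, or Gaussian-optimizer theorems are needed at this stage; the difference-of-entropies obstacle you flag is dissolved by recognizing the quantity as a difference of relative entropies along the channel $\tilde{\Phi}$. If you want to complete your proposal, this covariance computation (or an equivalent sharp statement) is the missing piece you would have to supply.
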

\begin{proof}
We can assume $\rho_{AM}$ pure.
Indeed, let $\rho_{AMR}$ be a purification of $\rho_{AM}$.
Since the marginals on $A$ of $\rho_{AM}$ and $\rho_{AMR}$ coincide, they have the same average number of photons.
The claim then follows from the strong subadditivity \cite{nielsen2010quantum,holevo2013quantum,hayashi2016quantum,wilde2017quantum}:
\begin{equation}
S(B|M)_{\rho_{BM}} \ge S(B|MR)_{{\rho}_{BMR}},
\end{equation}
where $\rho_{BMR}=(\Phi\otimes\mathbb{I}_{MR})(\rho_{AMR})$.

Let then $\rho_{AM}$ be a pure state.
We have
\begin{align}\label{eq:Stilde}
S(B|M)_{\rho_{BM}} &= S(\rho_{BM}) - S(\rho_M) = S(\tilde{\Phi}(\rho_A)) - S(\rho_A),\nonumber\\
S(B|A')_{\sigma_{BA'}} &= S(\tilde{\Phi}(\sigma_A)) - S(\sigma_A),
\end{align}
where $\tilde{\Phi}$ is the complementary channel of $\Phi$ \cite{nielsen2010quantum,holevo2013quantum,hayashi2016quantum,wilde2017quantum}, and $S(\rho_{BM})<\infty$ since $S(\rho_M)<\infty$ and $\rho_B$ has finite average number of photons \cite{holevo2013quantum}.
Let $\mathcal{N}:A\to A$ be the quantum additive noise channel such that $\mathrm{Tr}_A[H_A\mathcal{N}(\rho_A)] = n\,E$, and let $\rho'_{AM'}$ be a purification of $\rho'_A = \mathcal{N}(\rho_A)$.
Since $\tilde{\Phi}$ is a quantum Gaussian channel, we have from \cite[Lemma 10]{de2018conditional}
\begin{equation}
S(\tilde{\Phi}(\rho_A)) - S(\rho_A) \ge S(\tilde{\Phi}(\rho'_A)) - S(\rho'_A),
\end{equation}
hence
\begin{equation}
S(B|M)_{\rho_{BM}} \ge S(B|M')_{\rho'_{BM'}},
\end{equation}
where $\rho'_{BM'} = (\Phi\otimes\mathbb{I}_{M'})(\rho'_{AM'})$.
We can then assume $\mathrm{Tr}_A[H_A\rho_A] = n\,E$.

Now $\sigma_A$ is the thermal Gaussian state with the same average number of photons as $\rho_A$, and \cite{holevo2013quantum}
\begin{align}
S(\rho_A\|\sigma_A) &= S(\sigma_A) - S(\rho_A),\nonumber\\
S(\tilde{\Phi}(\rho_A)\|\tilde{\Phi}(\sigma_A)) &= S(\tilde{\Phi}(\sigma_A)) - S(\tilde{\Phi}(\rho_A)),
\end{align}
where the second line follows from Lemma \ref{lem:energy}.
The claim then follows from the data processing inequality for the relative entropy \cite{nielsen2010quantum,holevo2013quantum,hayashi2016quantum,wilde2017quantum}:
\begin{align}
&S(B|M)_{\rho_{BM}} - S(B|A')_{\sigma_{BA'}}\nonumber\\
&= S(\rho_A\|\sigma_A) - S(\tilde{\Phi}(\rho_A)\|\tilde{\Phi}(\sigma_A)) \ge0.
\end{align}
\end{proof}

\begin{lem}\label{lem:energy}
Let $\tilde{\Phi}$ be the complementary channel of the $n$-mode noisy quantum Gaussian attenuator with thermal noise.
Then, for any $n$-mode quantum state $\rho_A$
\begin{equation}\label{eq:claimlem}
S(\tilde{\Phi}(\rho_A)\|\tilde{\Phi}(\sigma_A)) = S(\tilde{\Phi}(\sigma_A)) - S(\tilde{\Phi}(\rho_A)),
\end{equation}
where $\sigma_A$ is the $n$-mode thermal Gaussian state with the same average number of photons as $\rho_A$.
\end{lem}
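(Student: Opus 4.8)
The plan is to reduce the claimed identity to the elementary relation between relative entropy and von Neumann entropy for a reference state whose logarithm is a quadratic Hamiltonian. Writing $S(\tilde{\Phi}(\rho_A)\|\tilde{\Phi}(\sigma_A)) = -S(\tilde{\Phi}(\rho_A)) - \mathrm{Tr}[\tilde{\Phi}(\rho_A)\ln\tilde{\Phi}(\sigma_A)]$, the claim \eqref{eq:claimlem} is equivalent to
\begin{equation}
\mathrm{Tr}\left[\tilde{\Phi}(\rho_A)\ln\tilde{\Phi}(\sigma_A)\right] = \mathrm{Tr}\left[\tilde{\Phi}(\sigma_A)\ln\tilde{\Phi}(\sigma_A)\right] = -S(\tilde{\Phi}(\sigma_A)).
\end{equation}
Since $\sigma_A$ is Gaussian and $\tilde{\Phi}$ is a Gaussian channel, $\tilde{\Phi}(\sigma_A)$ is a Gaussian state (faithful when the thermal noise is genuine), so $-\ln\tilde{\Phi}(\sigma_A)$ is a quadratic Hamiltonian in the ladder operators of the output of $\tilde{\Phi}$, up to a constant. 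Consequently $\mathrm{Tr}[\mu\ln\tilde{\Phi}(\sigma_A)]$ is an affine function of the second moments of $\mu$, and it suffices to show that the particular second moments of $\tilde{\Phi}(\rho_A)$ entering $\ln\tilde{\Phi}(\sigma_A)$ depend on $\rho_A$ only through $\mathrm{Tr}[\rho_A H_A]$. As $\rho_A$ and $\sigma_A$ carry the same average photon number by hypothesis, the two traces then coincide.

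To make these moments explicit I would use the physical Stinespring dilation of the noisy attenuator: for each mode, a beam-splitter of transmissivity $\eta$ acting on the input $a_i$ and a thermal environment mode $e_i$ of mean photon number $N_B$, with $e_i$ purified by a two-mode squeezed vacuum sharing a reference mode $e_i'$. The output of $\tilde{\Phi}$ consists, per mode, of the reflected port $c_i = -\sqrt{1-\eta}\,a_i + \sqrt{\eta}\,e_i$ together with $e_i'$, and the whole dilation factorizes over $i$. Two structural facts drive the argument. First, $\tilde{\Phi}$ is covariant under the phase rotations $a_i\mapsto \mathrm{e}^{\mathrm{i}\theta}a_i$, with induced output rotation $c_i\mapsto \mathrm{e}^{\mathrm{i}\theta}c_i$, $e_i'\mapsto \mathrm{e}^{-\mathrm{i}\theta}e_i'$; since $\sigma_A$ is phase invariant so is $\tilde{\Phi}(\sigma_A)$, hence $\ln\tilde{\Phi}(\sigma_A)$ can only contain the number terms $c_i^\dagger c_i$, $e_i'^\dagger e_i'$ and the two-mode terms $c_i e_i' + c_i^\dagger e_i'^\dagger$, while anomalous terms such as $c_i^2$ or $c_i e_i'^\dagger$ are forbidden. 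Second, because input and environment start uncorrelated, I can evaluate exactly the three surviving expectations under $\tilde{\Phi}(\rho_A)$: $\langle c_i^\dagger c_i\rangle = (1-\eta)\langle a_i^\dagger a_i\rangle_{\rho_A} + \eta\,N_B$ depends on $\rho_A$ only through its energy, whereas $\langle e_i'^\dagger e_i'\rangle = N_B$ and $\langle c_i e_i'\rangle = \sqrt{\eta}\,\sqrt{N_B(N_B+1)}$ are entirely independent of $\rho_A$, the latter because the squeezing-type correlation descends from the purifying state $e_i e_i'$ and the input contributes $\langle a_i e_i'\rangle = 0$.

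Summing over the $n$ modes with the common coefficients of $\ln\tilde{\Phi}(\sigma_A)$, every term of $\mathrm{Tr}[\tilde{\Phi}(\rho_A)\ln\tilde{\Phi}(\sigma_A)]$ is then either a fixed constant or a multiple of $\sum_i\langle a_i^\dagger a_i\rangle_{\rho_A} = \mathrm{Tr}[\rho_A H_A]$, which equals $\mathrm{Tr}[\sigma_A H_A]$ by construction; hence $\mathrm{Tr}[\tilde{\Phi}(\rho_A)\ln\tilde{\Phi}(\sigma_A)] = \mathrm{Tr}[\tilde{\Phi}(\sigma_A)\ln\tilde{\Phi}(\sigma_A)]$ and \eqref{eq:claimlem} follows. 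The main obstacle I anticipate is the bookkeeping of the complementary channel's second moments, and in particular establishing cleanly that the off-diagonal correlation $\langle c_i e_i'\rangle$ of $\tilde{\Phi}(\sigma_A)$—which genuinely enters $\ln\tilde{\Phi}(\sigma_A)$—receives no contribution from $\rho_A$; the phase-covariance argument is exactly what makes this tractable, since it annihilates a priori all the input-dependent anomalous moments that would otherwise obstruct the reduction to a dependence on the energy alone.
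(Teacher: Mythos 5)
Your proof is correct and rests on the same key mechanism as the paper's: phase covariance of the complementary attenuator together with the phase invariance of the thermal state forces the cross term $\mathrm{Tr}[\tilde{\Phi}(\rho_A)\ln\tilde{\Phi}(\sigma_A)]$ to depend on $\rho_A$ only through $\mathrm{Tr}[\rho_A H_A]$, which matches that of $\sigma_A$. The only difference is one of execution: the paper pulls the observable back to the input via the dual channel and classifies the phase-invariant quadratic polynomials as combinations of $H_A$ and $\mathbb{I}_A$, whereas you push the second moments forward through the Stinespring dilation and check each surviving term explicitly --- the two computations are equivalent.
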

\begin{proof}
We will prove that
\begin{equation}\label{eq:PhiH}
\tilde{\Phi}^\dag(\ln\tilde{\Phi}(\sigma_A)) = x\,H_A + y\,\mathbb{I}_A,
\end{equation}
where $x,\,y\in\mathbb{R}$, $\tilde{\Phi}^\dag$ is the dual channel of $\tilde{\Phi}$ \cite{holevo2013quantum} defined by
\begin{equation}
\mathrm{Tr}\left[X\,\tilde{\Phi}(Y)\right] = \mathrm{Tr}\left[\tilde{\Phi}^\dag(X)\,Y\right]
\end{equation}
for any bounded operator $X$ and any trace class operator $Y$, and $H_A = \sum_{i=1}^n a_i^\dag a_i$ is the Hamiltonian on $A$, where the $a_i$ are the ladder operators associated to the $n$ modes of $A$.
The claim \eqref{eq:claimlem} then follows since
\begin{align}
S(\tilde{\Phi}(\rho_A)\|\tilde{\Phi}(\sigma_A)) &= S(\tilde{\Phi}(\sigma_A)) - S(\tilde{\Phi}(\rho_A))\nonumber\\
&\phantom{=} + \mathrm{Tr}_A\left[\left(\sigma_A-\rho_A\right)\tilde{\Phi}^\dag(\ln\tilde{\Phi}(\sigma_A))\right],
\end{align}
and $\rho_A$ and $\sigma_A$ have the same average number of photons.

Since both $\sigma_A$ and $\tilde{\Phi}$ are the $n$-th tensor power of the corresponding one-mode versions, the left-hand side of \eqref{eq:PhiH} is the sum of one term for each of the $n$ modes.
Then, it is sufficient to prove the claim \eqref{eq:PhiH} for $n=1$, when $H_A=a^\dag a$.
Since $\tilde{\Phi}$ is a Gaussian channel, $\tilde{\Phi}(\sigma_A)$ is still a Gaussian state, $\ln\tilde{\Phi}(\sigma_A)$ is a quadratic polynomial in the quadratures, and $\tilde{\Phi}^\dag(\ln\tilde{\Phi}(\sigma_A))$ is a quadratic polynomial in $a$ and $a^\dag$ \cite{holevo2013quantum}.
The claim \eqref{eq:PhiH} follows if we prove that
\begin{equation}\label{eq:claim}
\mathrm{e}^{\mathrm{i}H_At}\,\tilde{\Phi}^\dag(\ln\tilde{\Phi}(\sigma_A))\,e^{-\mathrm{i}H_At} = \tilde{\Phi}^\dag(\ln\tilde{\Phi}(\sigma_A))\qquad\forall\;t\in\mathbb{R}.
\end{equation}
Indeed, $a^\dag a$ and $\mathbb{I}$ are the only invariant quadratic polynomials since $\mathrm{e}^{\mathrm{i}H_At}\,a\,\mathrm{e}^{-\mathrm{i}H_At} = \mathrm{e}^{-\mathrm{i}t}\,a$.

Let us now prove \eqref{eq:claim}.
The complementary channel of the one-mode noisy attenuator is \cite{holevo2013quantum}
\begin{equation}
\tilde{\Phi}(\rho_A) = \mathrm{Tr}_A\left[\left(U_{AE}\otimes\mathbb{I}_{E'}\right)\left(\rho_A\otimes\gamma_{EE'}\right)\left(U_{AE}^\dag\otimes\mathbb{I}_{E'}\right)\right],
\end{equation}
where $\gamma_{EE'}$ is a two-mode squeezed vacuum state of the one-mode Gaussian quantum systems $E$ and $E'$, and $U_{AE}$ implements a beam-splitter on $AE$.
We have for any $t\in\mathbb{R}$
\begin{align}
\mathrm{e}^{-\mathrm{i}\left(H_E-H_{E'}\right)t}\,\gamma_{EE'}\,\mathrm{e}^{\mathrm{i}\left(H_E-H_{E'}\right)t} &= \gamma_{EE'},\nonumber\\
\mathrm{e}^{-\mathrm{i}\left(H_A+H_E\right)t}\,U_{AE}\,\mathrm{e}^{\mathrm{i}\left(H_A+H_E\right)t} &= U_{AE},
\end{align}
where $H_E$ and $H_{E'}$ are the Hamiltonians of $E$ and $E'$, respectively.
Then, for any quantum state $\rho_A$, any bounded operator $X$ and any $t\in\mathbb{R}$
\begin{align}
&\tilde{\Phi}\left(\mathrm{e}^{-\mathrm{i}H_At}\,\rho_A\,\mathrm{e}^{\mathrm{i}H_At}\right) = \mathrm{e}^{-\mathrm{i}\left(H_E-H_{E'}\right)t}\,\tilde{\Phi}(\rho_A)\,\mathrm{e}^{\mathrm{i}\left(H_E-H_{E'}\right)t},\nonumber\\
&\tilde{\Phi}^\dag\left(\mathrm{e}^{\mathrm{i}\left(H_E-H_{E'}\right)t}\,X\,\mathrm{e}^{-\mathrm{i}\left(H_E-H_{E'}\right)t}\right) = \mathrm{e}^{\mathrm{i}H_At}\,\tilde{\Phi}^\dag(X)\,\mathrm{e}^{-\mathrm{i}H_At},
\end{align}
and the claim \eqref{eq:PhiH} follows.
\end{proof}

\section{Discussion}
We have proven that the tensor powers of the two-mode squeezed vacuum states are Alice's best probes in the asymmetric discrimination problem of quantum illumination, in the sense that they minimize the decay rate of the probability of a false positive with a given probability of a false negative.
This fundamental striking result implies that nor correlations nor entanglement among the modes of the signal that Alice sends can decrease the error probability.
Alice's best strategy is then the simplest one, and nothing can be gained by using any more exotic probe, such as states with multi-mode entanglement.

Putting together our results with \cite{cooney2016strong} (valid for finite-dimensional systems), the optimality of the two-mode squeezed vacuum states might be extended to the adaptive scenario with feedback \cite{hentschel2010machine}.
One potential approach is combining our results with the simulation and reduction techniques of \cite{pirandola2017ultimate}.
Another potential approach might be extending \cite{cooney2016strong} to infinite dimension exploiting the sandwiched R\'enyi divergences \cite{berta2016r}.

The main open question left is whether the optimality of the two-mode squeezed vacuum states is limited to the asymmetric discrimination.
We conjecture that this is not the case, and the two-mode squeezed vacuum states are optimal also in the symmetric discrimination problem.
Proving this conjecture will be the subject of future work.

\section*{Acknowledgements}
We thank Matthias Christandl for very helpful discussions and Mark Wilde and Stefano Pirandola for a careful reading of the paper and useful comments.

GdP acknowledges financial support from the European Research Council (ERC Grant Agreements no 337603 and 321029), the Danish Council for Independent Research (Sapere Aude),VILLUM FONDEN via the QMATH Centre of Excellence (Grant No. 10059), and the Marie Sk\l odowska-Curie Action GENIUS (grant no. 792557).
JB acknowledges financial support from the European Research Council (ERC Grant Agreements no 337603), the Danish Council for Independent Research (Sapere Aude), VILLUM FONDEN via the QMATH Centre of Excellence (Grant No. 10059), and Qubiz - Quantum Innovation Center.

\includegraphics[width=0.05\textwidth]{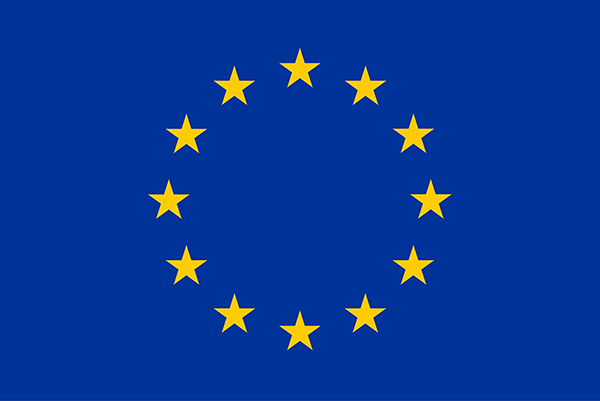}
This project has received funding from the European Union's Horizon 2020 research and innovation programme under the Marie Sk\l odowska-Curie grant agreement No. 792557.

\bibliographystyle{apsrev4-1}
\bibliography{ill}
\end{document}